\newtheoremstyle{mytheorem}
{0pt} % Space above
{0pt} % Space below
{\normalfont} % Body font
{1em} % Indent amount
{\em} % Theorem head font
{:} % Punctuation after theorem head
{0.5em} % Space after theorem head
{} % Theorem head spec (can be left empty, meaning ‘normal’ )
\theoremstyle{mytheorem}
\newtheorem{mytheorem}{Theorem}
\newtheorem{mylemma}{Lemma}
\newtheorem{myremark}{Remark}
\begin{document}

\title{Ergodic Rate Analysis of Cooperative Ambient Backscatter Communication}

\author{
Shaoqing~Zhou,
Wei~Xu,~\IEEEmembership{Senior Member,~IEEE,}
Kezhi~Wang,~\IEEEmembership{Member,~IEEE,}
Cunhua~Pan,~\IEEEmembership{Member,~IEEE,}
Mohamed-Slim~Alouini,~\IEEEmembership{Fellow,~IEEE,}
and~Arumugam~Nallanathan,~\IEEEmembership{Fellow,~IEEE}
% \thanks{Manuscript received April 19, 2005; revised August 26, 2015.}

\thanks{S. Zhou and W. Xu are with the National Mobile Communications Research Laboratory, Southeast University, Nanjing 210096, China (e-mail: sq.zhou@seu.edu.cn; wxu@seu.edu.cn).}
\thanks{K. Wang is with the Department of Computer and Information Sciences, Northumbria University, Newcastle upon Tyne NE1 8ST, U.K. (e-mail: kezhi.wang@northumbria.ac.uk).}
\thanks{C. Pan and A. Nallanathan are with the School of Electronic Engineering and Computer Science, Queen Mary University of London, London E1 4NS, U.K. (e-mail: c.pan@qmul.ac.uk; a.nallanathan@qmul.ac.uk).}
\thanks{M.-S. Alouini is with the Division of Computer, Electrical, and Mathematical Science and Engineering, King Abdullah University of Science and Technology, Thuwal 23955-6900, Saudi Arabia (e-mail: slim.alouini@kaust.edu.sa).}
}

% \markboth{IEEE COMMUNICATIONS LETTERS,~Vol.~14, No.~8, January~2019}
% {Shell \MakeLowercase{\textit{et al.}}: Bare Demo of IEEEtran.cls for IEEE Journals}

\maketitle

\begin{abstract}
Ambient backscatter communication has shown great potential in the development of future wireless networks. It enables a backscatter transmitter (BTx) to send information directly to an adjacent receiver by modulating over ambient radio frequency (RF) carriers. In this paper, we consider a cooperative ambient backscatter communication system where a multi-antenna cooperative receiver separately decodes signals from an RF source and a BTx. Upper bounds of the ergodic rates of both links are derived. The power scaling laws are accordingly characterized for both the primary cellular transmission and the cooperative backscatter. The impact of additional backscatter link is also quantitatively analyzed. Simulation results are provided to verify the derived results.
\end{abstract}

\begin{IEEEkeywords}
Ambient backscatter, cooperative receiver, successive interference cancellation, ergodic rate.
\end{IEEEkeywords}

\IEEEpeerreviewmaketitle

\section{Introduction}
\IEEEPARstart{A}{mbient} backscatter technology has attracted increasing attention from both indoor and outdoor future wireless applications due to its ability to realize energy efficient and low-cost communication. In ambient backscatter communication, backscatter transmitter (BTx) harvests energy from ambient radio frequency (RF) signals and modulates the received signals to send information to neighboring receivers \cite{Boyer2014Backscatter}. The RF source can usually be a base station in cellular network while the BTx can be an Internet-of-things (IoT) device with a single-antenna due to power limitations. The cellular receiver and the backscatter receiver may be the same device or separate terminals. The receiver is usually named as cooperative receiver (CRx) in the former case, which could be smart phone, laptop or other cellular devices. A typical application is that a smart phone in a cellular network simultaneously recovers both information from a cellular base station and a wearable sensor of body-area-network application.

One of the main challenges of ambient backscatter is to deal with severe direct-link interference from the RF source. A straightforward solution was to treat the interference as noise and to demodulate the backscattered information via energy detection \cite{Liu2013Ambient}. In \cite{Zhang2016Enabling}, the BTx conducted frequency shifts on backscattered signals to nearby unoccupied frequencies such that direct-link interference was avoided. The operation of frequency shift, however, imposes higher requirements on device hardware. Alternatively, multi-antenna receiver with maximum likelihood detectors and successive interference cancellation (SIC)-based detectors was considered in \cite{Yang2018Cooperative} for cooperative ambient backscatter systems.

Meanwhile, optimized ambient backscatter communication technologies have also been studied. A novel spectrum sharing model was proposed in \cite{Kang2018Riding} considering that the BTx employed the same frequency band as the RF source for transmission. The ergodic capacity of the backscatter link was maximized through simultaneously optimizing the transmit power of RF source and the reflection coefficient of BTx. In \cite{Long2018Transmit}, the transmit beamforming was optimized for rate maximization of a cooperative ambient backscatter communication system with a multi-antenna RF source.

In this paper, we study the ergodic rate performance of a cooperative ambient backscatter communication system. We derive the theoretical expressions for the ergodic rates of the conventional cellular data link and the cooperative backscatter link in the system. In particular, the power scaling laws of both transmitters are discovered and the characteristics of the backscatter link are analyzed with numerical verifications.

\section{System Model}

\begin{figure}
  \setlength{\abovecaptionskip}{0.cm}
  \setlength{\belowcaptionskip}{-0.cm}
  \centering
  \includegraphics[width = 3.5in]{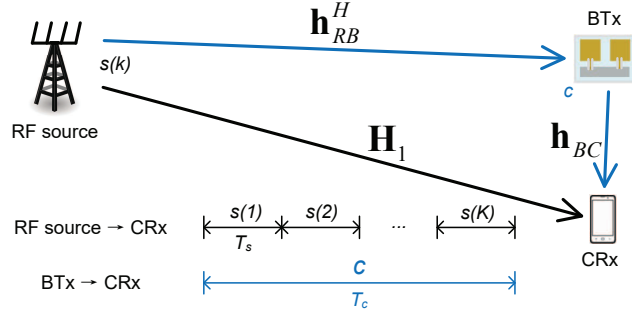}
  \caption{System model.}
  \label{Fig:system_model}
\end{figure}

We consider a cooperative ambient backscatter communication system which consists of an RF source, a single-antenna BTx and a CRx, as shown in Fig. \ref{Fig:system_model}. The RF source, which can be a regular transmitter in cellular network, is equipped with $M$ antennas. It sends information to the CRx and simultaneously provides RF signal waveforms to the BTx for backscattering. The BTx then reflects the modulated RF signals to the CRx. The CRx, which can be a typical cellular terminal \cite{Yang2018Cooperative}, is equipped with $N$ antennas. It receives both the information directly from the RF source and the information from the BTx through backscattering.

From the perspective of spectrum reusing, the transmission from the RF source to the CRx is regarded as the primary link, while the transmission from the BTx to the CRx through backscattering is considered as the secondary link. The CRx needs to retrieve independent information from both transmitters. Considering block-fading channels, the primary link channel, the channel from the RF source to the BTx, and the backscatter channel in each block are respectively denoted by $\mathbf{H}_1 \in \mathbb{C}^{N \times M}$, $\mathbf{h}_{RB}^H \in \mathbb{C}^{1 \times M}$, and $\mathbf{h}_{BC} \in \mathbb{C}^{N \times 1}$. Assume that the RF source knows only $\mathbf{H}_1$ and the CRx has both $\mathbf{H}_1$ and $\mathbf{h}_{BC}$.

Let $s(k)$ be the transmitted signal of the RF source and $\mathbb{E} \left[ \left| s(k) \right| ^2 \right] = P$, where $P$ is the transmitted power budget. Since the BTx usually has much lower rate transmission task than the RF source \cite{Liu2013Ambient}, without loss of generality, we assume that the symbol period of the BTx is $K$ times that of the RF source, as depicted in Fig. \ref{Fig:system_model}. Denote $c$ as the transmitted signal of BTx during $K$ RF source symbol periods. The signal $c$ is assumed to be a random variable with zero mean and unit variance. Then the backscattered signal from the BTx in the $k$th RF source symbol period can be expressed as $ \alpha c \mathbf{h}_{RB}^H \mathbf{w} s(k)$ for $k = 1, 2, \dots, K$, where $\alpha \in (0,1]$ represents the reflection coefficient of BTx and $\mathbf{w} \in \mathbb{C}^{M \times 1}$ is the beamforming vector at the RF source. The CRx receives signals from both transmitters, which is
\begin{equation} \label{y}
\mathbf{y}(k) = \mathbf{H}_1 \mathbf{w} s(k) + \alpha c \mathbf{h}_{BC} \mathbf{h}_{RB}^H \mathbf{w} s(k) + \mathbf{n}(k),
\end{equation}
where $\mathbf{n}(k)$ is the additive Gaussian noise vector with zero mean vector and covariance matrix $\sigma^2 \mathbf{I}$, where $\mathbf{I}$ is an identity matrix.

The secondary link usually experiences more attenuation than the primary link. Thus, based on the decoding strategy of SIC \cite{Long2018Transmit}, the CRx first decodes $s(k)$ and then it removes $s(k)$ from the received signal before detecting $c$. For the $k$th RF source symbol period, the detected signal is written as
\begin{equation}\label{primary_link}
y(k) = \mathbf{v}_s^H \mathbf{y}(k) = \mathbf{v}_s^H (\mathbf{H}_1 + \alpha c \mathbf{h}_{BC} \mathbf{h}_{RB}^H) \mathbf{w} s(k) + \mathbf{v}_s^H \mathbf{n}(k),
\end{equation}
where $\mathbf{v}_s \in \mathbb{C}^{N \times 1}$ is the combining vector of $s(k)$ at CRx. For detection, the transmitted signal $s(k)$ can be regarded passing through an equivalent channel $\overline{\mathbf{H}} \triangleq \mathbf{H}_1 + \alpha c \mathbf{h}_{BC} \mathbf{h}_{RB}^H$. Since the CRx does not know the equivalent channel information due to the unknown $c$, noncoherent detection is adopted for detecting $s(k)$. Given $c$, we obtain the signal-to-noise-ratio (SNR) of $s(k)$ as
\begin{equation} \label{SNR1}
\text{SNR}_{1|c} = \frac{P \vert \mathbf{v}_s^H \left( \mathbf{H}_1 + \alpha c \mathbf{h}_{BC} \mathbf{h}_{RB}^H \right) \mathbf{w} \vert^2}{\sigma^2}.
\end{equation}
Note that, without loss of generality, we assume in (\ref{SNR1}) that the combining vector is normalized, i.e., $\Vert \mathbf{v}_s \Vert^2 = 1$. The capacity of noncoherent detection is consistent with that of coherent detection in the case when the channel is block-fading and the transmission length is large enough \cite{Yingbin2004Capacity}. Correspondingly, for a slowly-varying channel and a sufficiently large $K$, we can write the ergodic rate of the primary link as\footnote{In this paper, we represent $T$ in \cite{Yingbin2004Capacity} as the period of $c$, $T_c$, over the period of $s(k)$, $T_s$, i.e., $K = T_c / T_s$, while Q in \cite{Yingbin2004Capacity} corresponds to the rank of the correlation matrix of the equivalent channel $\overline{\mathbf{H}}$. Moreover, we have the coherence time, $T_{coh}$, satisfying $K < T_{coh} / T_s$ according to [7, \uppercase\expandafter{\romannumeral2}.C].}
\begin{equation} \label{R1}
R_1 = \mathbb{E}_{c,h} \left[ \log_2 \left( 1 + \text{SNR}_{1|c} \right) \right],
\end{equation}
where $\mathbb{E}_{c,h} \left[ \cdot \right]$ denotes the expectation over the BTx symbol and the channel fading. Based on the assumption of channel state information (CSI) and the consideration that the RF source is responsible for the primary link in cellular network, it is natural to choose the beamforming and combining vectors by matching the primary channel, i.e.,
\begin{equation} \label{uandvs}
\mathbf{v}_s = \mathbf{u}_{1m}, \ \mathbf{w} = \mathbf{v}_{1m},
\end{equation}
where $\mathbf{u}_{1m}$ and $\mathbf{v}_{1m}$ respectively represent the corresponding left and right singular vectors of the largest singular value of $\mathbf{H}_1$.

After the RF source signal is detected and then removed from $\mathbf{y}(k)$, the received signal of the secondary link becomes
\begin{equation}
\hat y(k) = \alpha \mathbf{v}_c^H \mathbf{h}_{BC} \mathbf{h}_{RB}^H \mathbf{w} s(k) c + \mathbf{v}_c^H \mathbf{n}(k),
\end{equation}
where $\mathbf{v}_c \in \mathbb{C}^{N \times 1}$ is the combining vector for detecting $c$. During one BTx symbol period, denote $\mathbf{s} = \left[ s(1), s(2), \dots, s(K) \right] ^T$ and $\hat{ \mathbf{y} } = \left[ \hat y(1), \hat y(2), \dots, \hat y(K) \right] ^T$. The received vector of the secondary link can be written as
\begin{equation}
\hat{ \mathbf{y} } = \alpha \mathbf{v}_c^H \mathbf{h}_{BC} \mathbf{h}_{RB}^H \mathbf{w} \mathbf{s} c + \mathbf{n}^{'},
\end{equation}
where $\mathbf{n}^{'} = \left[ \mathbf{v}_c^H\mathbf{n}(1), \mathbf{v}_c^H\mathbf{n}(2), \dots, \mathbf{v}_c^H\mathbf{n}(K) \right] ^T$. Applying the maximal ratio combining, the SNR of detecting $c$ is given as
\begin{equation} \label{SNR2}
\text{SNR}_{2} = K \frac{P \alpha^2 \vert \mathbf{v}_c^H \mathbf{h}_{BC} \mathbf{h}_{RB}^H \mathbf{w} \vert^2}{\Vert \mathbf{v}_c \Vert^2 \sigma^2}.
\end{equation}
Then, the ergodic rate of the secondary link is
\begin{equation} \label{R2}
R_2 = \mathbb{E}_h \left[ \frac{1}{K} \log_2 \left( 1 + \text{SNR}_{2} \right) \right].
\end{equation}
Since the transmit beamforming vector has been determined, the combining vector of the backscattered signal is chosen to match the secondary link channel, i.e., $\mathbf{v}_c = \widetilde{\mathbf{h}}_{BC} \triangleq \mathbf{h}_{BC} / \Vert \mathbf{h}_{BC} \Vert$.

\section{Rate Region Analysis}
In this section, we characterize the rate region of both ergodic rates of the primary and backscatter links and discover the power scaling laws in the cooperative ambient backscatter system.

Before presenting our main results of the derived rate bounds, we first give the useful preliminary results in the following \emph{Lemma 1}.

\subsection{Preliminary Calculations for Rate Analysis}
\begin{mylemma}
Assume that all the channels are independent Rayleigh fading, e.g., $\mathbf{H}_1 \sim \mathcal{CN} (\mathbf{0}_N, \sigma_{1}^2 \mathbf{I}_N)$, $\mathbf{h}_{RB} \sim \mathcal{CN} (\mathbf{0}_M, \sigma_{RB}^2 \mathbf{I}_M)$ and $\mathbf{h}_{BC} \sim \mathcal{CN} (\mathbf{0}_N, \sigma_{BC}^2 \mathbf{I}_N)$. The ergodic rates in (\ref{R1}) and (\ref{R2}) are respectively characterized as
\begin{align} \label{R1upper1}
R_1 &\leq \log_2\left( \frac{P \alpha^2 \sigma_{BC}^2 \sigma_{RB}^2}{\beta \sigma^2} \right) + \log_2e\beta^{-1} G\begin{smallmatrix}4\ 1\\2\ 4\end{smallmatrix} \left[\beta^{-1}\left| \begin{smallmatrix}-1&0\\0&0&-1&-1\end{smallmatrix} \right.\right], \\ \label{R2equ}
R_2 &= \frac{\log_2e}{K\Gamma(N)} \gamma^{-\frac{N+1}{2}} G\begin{smallmatrix}4\ 1\\2\ 4\end{smallmatrix} \left[\gamma^{-1}\left| \begin{smallmatrix}-\frac{N+1}{2}&-\frac{N-1}{2}\\\frac{N-1}{2}&-\frac{N-1}{2}&-\frac{N+1}{2}&-\frac{N+1}{2}\end{smallmatrix} \right.\right],
\end{align}
where $\beta = \frac{P\alpha^2 \sigma_{BC}^2 \sigma_{RB}^2}{\sigma^2 + P\left(\sqrt{M} + \sqrt{N}\right)^2}$, $\gamma = \frac{P K \alpha^2 \sigma_{BC}^2 \sigma_{RB}^2}{\sigma^2}$, and $G \begin{smallmatrix}m\ n\\p\ q\end{smallmatrix} \left[z\left| \begin{smallmatrix}(a_p)\\(b_q)\end{smallmatrix} \right.\right]$ is the Meijer's G-function \cite{Adamchik1990The}.
\end{mylemma}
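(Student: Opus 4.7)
The plan is to reduce each ergodic-rate expectation to one of the form $\mathbb{E}[\ln(1+\eta UV)]$ with $U,V$ independent Gamma random variables, and then invoke a known Meijer $G$-function integral identity. The key enabling observation is the following independence. Plugging $\mathbf{v}_s=\mathbf{u}_{1m}$ and $\mathbf{w}=\mathbf{v}_{1m}$ into (\ref{SNR1}) and using $\mathbf{u}_{1m}^H\mathbf{H}_1\mathbf{v}_{1m}=\sigma_{1m}$, the largest singular value of $\mathbf{H}_1$, I rewrite $\mathrm{SNR}_{1|c}=P|\sigma_{1m}+\alpha c X|^2/\sigma^2$ with $X\triangleq\mathbf{u}_{1m}^H\mathbf{h}_{BC}\mathbf{h}_{RB}^H\mathbf{v}_{1m}$. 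Because $\mathbf{h}_{BC}$ and $\mathbf{h}_{RB}$ are independent of $\mathbf{H}_1$ and the complex Gaussian law is rotation invariant, conditional on $\mathbf{H}_1$ the scalars $\mathbf{u}_{1m}^H\mathbf{h}_{BC}$ and $\mathbf{h}_{RB}^H\mathbf{v}_{1m}$ remain $\mathcal{CN}(0,\sigma_{BC}^2)$ and $\mathcal{CN}(0,\sigma_{RB}^2)$, respectively; hence $Y_1\triangleq|\mathbf{u}_{1m}^H\mathbf{h}_{BC}|^2$ and $Y_2\triangleq|\mathbf{h}_{RB}^H\mathbf{v}_{1m}|^2$ are independent exponentials with means $\sigma_{BC}^2,\sigma_{RB}^2$, and they are jointly independent of $\mathbf{H}_1$---in particular, independent of $\sigma_{1m}$.

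For $R_1$ I apply two Jensen steps. First, conditioning on $(\mathbf{H}_1,\mathbf{h}_{BC},\mathbf{h}_{RB})$ and averaging over $c$, the cross term in $|\sigma_{1m}+\alpha c X|^2$ vanishes since $\mathbb{E}[c]=0$, and log concavity together with $\mathbb{E}[|c|^2]=1$ gives $\mathbb{E}_c[\log_2(1+\mathrm{SNR}_{1|c})]\leq\log_2(1+P(\sigma_{1m}^2+\alpha^2 Y_1 Y_2)/\sigma^2)$. Next, conditioning on $(Y_1,Y_2)$ and using the independence of $\sigma_{1m}$ from $(Y_1,Y_2)$ just established, a further Jensen application replaces $\sigma_{1m}^2$ by $\mathbb{E}[\sigma_{1m}^2]$, which I bound above by $(\sqrt{M}+\sqrt{N})^2$ via the standard estimate for the spectral norm of a Gaussian matrix. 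Factoring the logarithm's argument by $\sigma^2+P(\sqrt{M}+\sqrt{N})^2$ then splits it into the claimed constant $\log_2(P\alpha^2\sigma_{BC}^2\sigma_{RB}^2/(\beta\sigma^2))$ plus a residual expectation $\log_2 e\cdot\mathbb{E}[\ln(1+\beta Z_1 Z_2)]$, with $Z_1\triangleq Y_1/\sigma_{BC}^2$ and $Z_2\triangleq Y_2/\sigma_{RB}^2$ independent unit-mean exponentials. For $R_2$ the situation is simpler: the MRC choice $\mathbf{v}_c=\mathbf{h}_{BC}/\|\mathbf{h}_{BC}\|$ makes (\ref{SNR2}) simplify exactly to $\gamma\,UV$, with $U\triangleq\|\mathbf{h}_{BC}\|^2/\sigma_{BC}^2\sim\mathrm{Gamma}(N,1)$ and $V\triangleq Y_2/\sigma_{RB}^2\sim\mathrm{Gamma}(1,1)$ independent, so no bounding is required and (\ref{R2equ}) holds as an equality.

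Both residual integrals thus have the form $\mathbb{E}[\ln(1+\eta UV)]$ with $U\sim\mathrm{Gamma}(a,1)$, $V\sim\mathrm{Gamma}(b,1)$ independent. Writing $\ln(1+x)$ as a Meijer $G^{1,2}_{2,2}$ and the density of $UV$ as $\frac{1}{\Gamma(a)\Gamma(b)}G^{2,0}_{0,2}$ with shift parameters $a-1,b-1$, the integral collapses via the multiplicative-convolution identity for Meijer $G$-functions to a single $G^{4,1}_{2,4}$. Specializing $(a,b)=(1,1)$ with argument $\beta^{-1}$ produces the expression on the right of (\ref{R1upper1}); specializing $(a,b)=(N,1)$ with argument $\gamma^{-1}$ and extracting the homogeneity prefactor $\gamma^{-(N+1)/2}$ produces (\ref{R2equ}). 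I expect the Jensen steps and the singular-value bound to be routine; the main obstacle is the Meijer-$G$ bookkeeping---matching the parameter vectors $(a_p),(b_q)$, verifying the conditions under which the Mellin--Barnes contour for the convolution identity can be closed, and massaging the answer into the exact parameter tuple printed in the lemma. These Jensen steps and the bound on $\mathbb{E}[\sigma_{1m}^2]$ are the only sources of slack, so they account for the ``$\leq$'' in (\ref{R1upper1}) versus the ``$=$'' in (\ref{R2equ}).
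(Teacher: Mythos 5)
Your proposal follows essentially the same route as the paper's proof: Jensen's inequality over $c$ to reduce $R_1$ to $\mathbb{E}_h\big[\log_2\big(1+\tfrac{P}{\sigma^2}(\sigma_{1m}^2+\alpha^2 Y_1Y_2)\big)\big]$, identification of the (normalized) cross-channel gains as independent exponentials whose product density is a $G\begin{smallmatrix}2\ 0\\0\ 2\end{smallmatrix}$ (equivalently $\tfrac{1}{2}\mathcal{K}_0(\sqrt{\cdot})$), the Mellin-convolution identity collapsing $\mathbb{E}[\ln(1+\eta UV)]$ into a single $G\begin{smallmatrix}4\ 1\\2\ 4\end{smallmatrix}$, and an exact Gamma$(N,1)\times$Gamma$(1,1)$ computation for $R_2$. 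The only substantive deviation is your treatment of $\sigma_{1m}^2$: you insert a second Jensen step and claim $\mathbb{E}[\sigma_{1m}^2]\le(\sqrt{M}+\sqrt{N})^2$ from the ``standard'' spectral-norm estimate (which actually bounds $\mathbb{E}\Vert\mathbf{H}_1\Vert$, not its second moment, so this needs a separate justification), whereas the paper substitutes the large-dimension limit $\sigma_{1m}^2/M\rightarrow(1+\sqrt{N/M})^2$ of Edelman --- both are approximations of comparable rigor and produce the same $\beta$.
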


\def\QEDopen{{\setlength{\fboxsep}{0pt}\setlength{\fboxrule}{0.2pt}\fbox{\rule[0pt]{0pt}{1.3ex}\rule[0pt]{1.3ex}{0pt}}}}
\def\QED{\QEDopen}\def\endproof{\hspace*{\fill}\QED\par\endtrivlist\unskip}
\begin{proof}
Start with the proof of the upper bound of $R_1$. By substituting (\ref{SNR1}) and (\ref{uandvs}) into (\ref{R1}), and denoting $\sigma_{1m}$ as the largest singular value of $\mathbf{H}_1$, we have
\begin{equation} \label{R1upper2}
\begin{split}
R_1 %&= \mathbb{E}_{c,h} \left[ \log_2 \left( 1 + \frac{P}{\sigma^2} \vert\sigma_{1m} + \alpha c \mathbf{u}_{1m}^H \mathbf{h}_{BC} \mathbf{h}_{RB}^H \mathbf{v}_{1m}\vert^2 \right) \right] \\
&\overset{(a)} \leq \mathbb{E}_h \left[ \log_2 \left( 1 + \frac{P}{\sigma^2} \mathbb{E}_{c} \left[ \vert\sigma_{1m} + \alpha c \mathbf{u}_{1m}^H \mathbf{h}_{BC} \mathbf{h}_{RB}^H \mathbf{v}_{1m}\vert^2 \right] \right) \right] \\
&\overset{(b)} = \mathbb{E}_h \left[ \log_2 \left( 1 + \frac{P}{\sigma^2} (\sigma_{1m}^2 + \alpha^2 \vert\mathbf{u}_{1m}^H \mathbf{h}_{BC} \vert^2 \vert \mathbf{h}_{RB}^H \mathbf{v}_{1m}\vert^2 ) \right) \right],
\end{split}
\end{equation}
where $(a)$ applies Jensen's Inequality, and $(b)$ follows from the fact that $c$ is a zero-mean and unit-variance variable and independent of the channel parameters.

To evaluate (\ref{R1upper2}), we need the distribution of the product $\vert\mathbf{u}_{1m}^H \mathbf{h}_{BC} \vert^2 \vert \mathbf{h}_{RB}^H \mathbf{v}_{1m}\vert^2$. For notational simplicity, denote $X \triangleq \vert \mathbf{u}_{1m}^H \widetilde{\mathbf{h}}_{BC} \vert^2$, $Y \triangleq \Vert \frac{\sqrt{2}}{\sigma_{BC}} \mathbf{h}_{BC} \Vert^2$, and $Z \triangleq \vert \frac{\sqrt{2}}{\sigma_{BC}} \mathbf{u}_{1m}^H \mathbf{h}_{BC} \vert^2 = XY$. Note that $X$ is Beta distributed with parameters $1$ and $N-1$ since it is the squared absolute inner product of two uniformly distributed normalized vectors \cite{Chun2007On} and $Y$ is distributed as $\chi^2_{2N}$. We have the probability density function (PDF) of $Z \geq 0$ as
\begin{equation} \label{ZPDF}
\begin{split}
f_Z(z) &\overset{(c)} = \int_z^{\infty} f_X\left(\frac{z}{y}\right) f_Y(y) \frac{1}{y} \mathrm{d}y \\
&\overset{(d)} =\frac{N-1}{2^N\Gamma(N)} \int_z^{\infty} (y-z)^{N-2} e^{-y/2} \mathrm{d}y \\
% &= \frac{(N-1)e^{-z/2}}{2^N\Gamma(N)} \int_0^{\infty} y^{N-2} e^{-y/2} \mathrm{d}y \\
&\overset{(e)} = \frac{(N-1)e^{-z/2}}{2^N\Gamma(N)} \times \frac{(N-2)!}{2^{-N+1}} \\
&= \frac{1}{2} e^{-z/2},
\end{split}
\end{equation}
where $\Gamma(\cdot)$ is the Gamma function, $(c)$ uses the independency of $X$ and $Y$ \cite{Xu2010MIMO} and the fact that $f_X(x) =0$ for $x>1$, $(d)$ is obtained by substituting the PDFs $f_X(x)=(N-1)(1-x)^{N-2},\ 0 \leq x \leq 1$ and $f_Y(y) = \frac{1}{2^N\Gamma(N)}e^{-y/2}y^{N-1},\ y > 0$, and $(e)$ uses the integral in\setcitestyle{open={},close={}} [\cite{Gradshteyn2007Table}, Eq. (3.3513)].

Similarly, we can obtain that the PDF of $\vert \frac{\sqrt{2}}{\sigma_{RB}} \mathbf{h}_{RB}^H \mathbf{v}_{1m}\vert^2$ is the same exponential distribution as (\ref{ZPDF}) with parameter $1/2$. Now let $A \triangleq \vert \frac{\sqrt{2}}{\sigma_{BC}} \mathbf{u}_{1m}^H \mathbf{h}_{BC} \vert^2 \vert \frac{\sqrt{2}}{\sigma_{RB}} \mathbf{h}_{RB}^H \mathbf{v}_{1m}\vert^2$, we have
\begin{equation} \label{PDF}
f_A(a) = \int_0^{\infty} f_Z\left(\frac{a}{z}\right) f_Z(z) \frac{1}{z} \mathrm{d}z = \frac{1}{2}\mathcal{K}_0(\sqrt{a}), a > 0,
% &= \frac{1}{4} \int_0^{\infty} z^{-1} \exp\left(-\frac{a}{2z}-\frac{z}{2}\right) \mathrm{d}z
\end{equation}
where $\mathcal{K}_{\upsilon}(\cdot)$ is the $v$th order modified Bessel function of the second kind\setcitestyle{open=[,close=]}\cite{Gradshteyn2007Table}, and we use the integral in\setcitestyle{open={},close={}}[\cite{Gradshteyn2007Table}, Eq. (3.4719)]. By further invoking the asymptotic result\setcitestyle{open=[,close=]}\cite{Alan1989Eigenvalues}
\begin{equation} \label{lambda}
\frac{1}{M} \sigma_{1m}^2 \rightarrow (1 + \sqrt{N/M})^2
\end{equation}
for large $M$ and $N$ but a constant $N/M$, we rewrite (\ref{R1upper2}) as
\begin{equation}
\begin{split}
% R_1 &\leq \frac{1}{2} \int_0^{\infty} \log_2 \left( 1 + \frac{P}{\sigma^2} \left( \left(\sqrt{M} + \sqrt{N}\right)^2 + \alpha^2 \frac{\sigma_{BC}^2 \sigma_{RB}^2}{4}a \right) \right) \mathcal{K}_0(\sqrt{a}) \mathrm{d}a \\
R_1 &\leq \frac{1}{2} \log_2\left( \frac{P \alpha^2 \sigma_{BC}^2 \sigma_{RB}^2}{\beta\sigma^2} \right) \int_0^{\infty} \mathcal{K}_0(\sqrt{a}) \mathrm{d}a + \frac{\log_2e}{2} \int_0^{\infty} \ln\left(1+\frac{\beta}{4}a\right) \mathcal{K}_0(\sqrt{a}) \mathrm{d}a \\
&\overset{(f)}= \log_2\left( \frac{P \alpha^2 \sigma_{BC}^2 \sigma_{RB}^2}{\beta\sigma^2} \right) \int_0^{\infty}a\mathcal{K}_0(a) \mathrm{d}a + \frac{\log_2e}{4} \int_0^{\infty} G\begin{smallmatrix}1\ 2\\2\ 2\end{smallmatrix} \left[\frac{\beta}{4} a\left| \begin{smallmatrix}1&1\\1&0\end{smallmatrix} \right.\right] G\begin{smallmatrix}2\ 0\\0\ 2\end{smallmatrix} \left[\frac{a}{4}\left| \begin{smallmatrix}\cdot&\cdot\\0&0\end{smallmatrix} \right.\right] \mathrm{d}a\\
&\overset{(g)}= \log_2\left( \frac{P \alpha^2  \sigma_{BC}^2 \sigma_{RB}^2}{\beta \sigma^2} \right) + \log_2e\frac{1}{\beta} G\begin{smallmatrix}4\ 1\\2\ 4\end{smallmatrix} \left[\frac{1}{\beta}\left| \begin{smallmatrix}-1&0\\0&0&-1&-1\end{smallmatrix} \right.\right] \\
\end{split}
\end{equation}
where $(f)$ uses the variable substitution and\setcitestyle{open={},close={}}[\cite{Adamchik1990The}, Eqs. (11)\&(14)] and $(g)$ results from the integrals in [\cite{Gradshteyn2007Table}, Eq. (6.56116)] and [\cite{Adamchik1990The}, Eq. (21)].

Proving (\ref{R2equ}) is analogous. By substituting (\ref{uandvs}), (\ref{SNR2}) and $\mathbf{v}_c = \widetilde{\mathbf{h}}_{BC}$ into (\ref{R2}), we have
\begin{equation}
\begin{split}
R_2 &= \frac{1}{K} \mathbb{E}_h \left[ \log_2 \left( 1 + \frac{KP\alpha^2}{\sigma^2} \Vert \mathbf{h}_{BC} \Vert^2 \vert \mathbf{h}_{RB}^H \mathbf{v}_{1m} \vert^2 \right) \right] \\
&\overset{(h)}= \frac{1}{K} \int_{0}^{\infty} \log_2 \left( 1 + \frac{\gamma}{4} b \right) \times \left( \int_0^{\infty} f_Z\left(\frac{b}{y}\right) f_Y(y) \frac{1}{y} \mathrm{d}y \right) \mathrm{d}b \\
% &= \frac{1}{2^{N+1}K \Gamma(N)} \int_{0}^{\infty} \log_2 \left( 1 + \frac{\gamma}{4} b \right) \\
% &\qquad \qquad \times \left( \int_0^{\infty} y^{N-2}\exp\left(-\frac{b}{2y}-\frac{y}{2}\right) \mathrm{d}y \right) \mathrm{d}b \\
&\overset{(i)} = \frac{1}{ 2^N K\Gamma(N)} \int_0^{\infty} \log_2\left(1 + \frac{\gamma}{4} b\right) b^{\frac{N-1}{2}} \mathcal{K}_{N-1}(\sqrt{b})\mathrm{d}b \\
&\overset{(j)} = \frac{\log_2e}{ 2^{N +1}K\Gamma(N)} \int_0^{\infty} b^{\frac{N-1}{2}} G\begin{smallmatrix}1\ 2\\2\ 2\end{smallmatrix} \left[\frac{\gamma}{4}b\left| \begin{smallmatrix}1&1\\1&0\end{smallmatrix} \right.\right] G\begin{smallmatrix}2\ 0\\0\ 2\end{smallmatrix} \left[\frac{b}{4}\left| \begin{smallmatrix}\cdot&\cdot\\\frac{N-1}{2}&-\frac{N-1}{2}\end{smallmatrix} \right.\right] \mathrm{d}b \\
&\overset{(l)}=\frac{\log_2e}{K\Gamma(N)} \gamma^{-\frac{N+1}{2}} G\begin{smallmatrix}4\ 1\\2\ 4\end{smallmatrix} \left[\gamma^{-1}\left| \begin{smallmatrix}-\frac{N+1}{2}&-\frac{N-1}{2}\\\frac{N-1}{2}&-\frac{N-1}{2}&-\frac{N+1}{2}&-\frac{N+1}{2}\end{smallmatrix} \right.\right],
\end{split}
\end{equation}\setcitestyle{open={},close={}}where $(h)$ follows from the fact that $\Vert \mathbf{h}_{BC} \Vert^2$ and $\vert \mathbf{h}_{RB}^H \mathbf{v}_{1m} \vert^2$ are independent, $(i)$ uses the integral in [\cite{Gradshteyn2007Table}, Eq. (3.4719)], and $(j)$ and $(l)$ use [\cite{Adamchik1990The}, Eqs. (11), (14)\&(21)].
\end{proof}

\subsection{Rate Bound Characterization}
The expressions in \emph{Lemma 1} with the Meijer's G-function are still too complicated to obtain insights. We further present tight bounds of the rates in the following. The tightness of the bounds will be evaluated by numerical exemplifications in the next section.

\begin{mytheorem} \label{Proposition}
The ergodic rates can be upper bounded by the closed-form expressions
\begin{equation} \label{R1upper3}
R_1 \leq \log_2 \left( 1 + \frac{P}{\sigma^2} \left( \left(\sqrt{M} + \sqrt{N}\right)^2 + \alpha^2 \sigma_{BC}^2 \sigma_{RB}^2 \right) \right) \triangleq \bar{R}_1,
\end{equation}
\begin{equation} \label{R2upper}
R_2 \leq \frac{1}{K} \log_2 \left( 1 + \frac{P\sigma_{BC}^2 \sigma_{RB}^2}{\sigma^2} K N\alpha^2 \right) \triangleq \bar{R}_2.
\end{equation}
\end{mytheorem}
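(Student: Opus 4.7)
The plan is to apply Jensen's inequality a second time (to pull the channel expectation inside the logarithm) starting from the intermediate expressions already derived in the proof of \emph{Lemma 1}, and then to evaluate the resulting first-order moments using isotropy and independence of the channels.

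For the bound on $R_1$, I would restart from the inequality already obtained in (\ref{R1upper2}),
\begin{equation*}
R_1 \leq \mathbb{E}_h \left[ \log_2 \left( 1 + \frac{P}{\sigma^2} \bigl(\sigma_{1m}^2 + \alpha^2 |\mathbf{u}_{1m}^H \mathbf{h}_{BC}|^2 |\mathbf{h}_{RB}^H \mathbf{v}_{1m}|^2\bigr) \right) \right],
\end{equation*}
and apply Jensen's inequality again, using concavity of $\log_2(1+\cdot)$, to move $\mathbb{E}_h$ inside. This reduces the problem to computing $\mathbb{E}_h[\sigma_{1m}^2]$ and $\mathbb{E}_h\bigl[|\mathbf{u}_{1m}^H \mathbf{h}_{BC}|^2 |\mathbf{h}_{RB}^H \mathbf{v}_{1m}|^2\bigr]$. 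For the first, I would substitute the large-dimensional asymptotic (\ref{lambda}), giving $\mathbb{E}[\sigma_{1m}^2] \approx (\sqrt{M}+\sqrt{N})^2$. For the second, I would exploit that $\mathbf{u}_{1m}$ and $\mathbf{v}_{1m}$ are deterministic functions of $\mathbf{H}_1$ and therefore independent of $\mathbf{h}_{BC}$ and $\mathbf{h}_{RB}$: conditioning on $\mathbf{u}_{1m}$, the scalar $\mathbf{u}_{1m}^H \mathbf{h}_{BC}$ is $\mathcal{CN}(0,\sigma_{BC}^2)$ by isotropy of the complex Gaussian, so its squared magnitude has mean $\sigma_{BC}^2$; similarly $\mathbb{E}[|\mathbf{h}_{RB}^H\mathbf{v}_{1m}|^2] = \sigma_{RB}^2$. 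Independence of $\mathbf{h}_{BC}$ and $\mathbf{h}_{RB}$ lets me factor the joint expectation into the product $\sigma_{BC}^2 \sigma_{RB}^2$. Substituting both moments produces exactly (\ref{R1upper3}).

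For the bound on $R_2$, I would start from the equivalent form of the rate obtained in step $(h)$ of the Lemma~1 proof,
\begin{equation*}
R_2 = \frac{1}{K}\, \mathbb{E}_h\!\left[ \log_2\!\left(1 + \frac{KP\alpha^2}{\sigma^2}\|\mathbf{h}_{BC}\|^2 |\mathbf{h}_{RB}^H \mathbf{v}_{1m}|^2\right)\right],
\end{equation*}
and again invoke Jensen's inequality. Since $\mathbf{h}_{BC}$ and $\mathbf{h}_{RB}$ are independent, the two factors inside the expectation separate; then $\mathbb{E}[\|\mathbf{h}_{BC}\|^2]=N\sigma_{BC}^2$ (sum of $N$ i.i.d.\ exponential magnitudes), and $\mathbb{E}[|\mathbf{h}_{RB}^H\mathbf{v}_{1m}|^2]=\sigma_{RB}^2$ by the same isotropy argument as above. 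Plugging these in yields exactly (\ref{R2upper}).

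The main subtlety, and effectively the only non-mechanical step, is the treatment of $\mathbb{E}[\sigma_{1m}^2]$: the quantity $(\sqrt{M}+\sqrt{N})^2$ is the large-dimensional almost-sure limit of $\sigma_{1m}^2$ from \cite{Alan1989Eigenvalues} rather than a universal finite-dimensional upper bound, so the claim in (\ref{R1upper3}) should be read as an asymptotically tight approximation in the regime where $M$ and $N$ are both large with a fixed ratio. All remaining ingredients (Jensen, independence, isotropy, and moments of chi-square variables) are standard, and the two bounds fall out once the expectations are replaced by these scalar values.
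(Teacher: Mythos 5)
Your proposal is correct and follows essentially the same route as the paper: a second application of Jensen's inequality to the intermediate expression (\ref{R1upper2}) (and to the analogous form of $R_2$), followed by factoring the expectations via independence and evaluating $\mathbb{E}_h[\sigma_{1m}^2]\to(\sqrt{M}+\sqrt{N})^2$, $\mathbb{E}_h[\vert\mathbf{u}_{1m}^H\mathbf{h}_{BC}\vert^2]=\sigma_{BC}^2$, $\mathbb{E}_h[\Vert\mathbf{h}_{BC}\Vert^2]=N\sigma_{BC}^2$, and $\mathbb{E}_h[\vert\mathbf{h}_{RB}^H\mathbf{v}_{1m}\vert^2]=\sigma_{RB}^2$. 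Your caveat that $(\sqrt{M}+\sqrt{N})^2$ is only an asymptotic limit of $\sigma_{1m}^2$ rather than a finite-dimensional bound is a fair and accurate reading of the paper's own use of (\ref{lambda}), and your $R_2$ derivation supplies exactly the step the paper omits.
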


\begin{proof}
From $R_1$ in (\ref{R1upper2}), we further apply Jensen's Inequality with $h$ in the expectation, it gives
\begin{equation} \label{R1upper4}
\begin{split}
R_1 &\leq \log_2 \bigg( 1 + \frac{P}{\sigma^2} \Big(\mathbb{E}_h \left[\sigma_{1m}^2\right] + \alpha^2 \mathbb{E}_h \left[ \vert\mathbf{u}_{1m}^H \mathbf{h}_{BC}\vert^2 \right] \mathbb{E}_h \left[ \vert \mathbf{h}_{RB}^H \mathbf{v}_{1m}\vert^2 \right] \Big) \bigg),
\end{split}
\end{equation}
where we also use the independence between the variables. Given the distributions of the variables as in the proof of \emph{Lemma 1}, we have
\begin{align}
&\mathbb{E}_h \left[\sigma_{1m}^2\right]\rightarrow \left(\sqrt{M} + \sqrt{N}\right)^2, \\
&\mathbb{E}_h \left[ \vert\mathbf{u}_{1m}^H \mathbf{h}_{BC}\vert^2 \right] = \sigma_{BC}^2, \ \mathbb{E}_h \left[ \vert \mathbf{h}_{RB}^H \mathbf{v}_{1m}\vert^2 \right] = \sigma_{RB}^2.
\end{align}
Substituting those results in (\ref{R1upper4}) completes the proof of (\ref{R1upper3}). The proof of (\ref{R2upper}) is similar and it is omitted here.
\end{proof}

\begin{myremark} \label{remark1}
It is concluded from (\ref{R1upper3}) that the transmitted power of an $M$-antenna RF source in the cooperative ambient backscatter system with $N$ receive antennas can be approximately reduced by the proportion of $1/(\sqrt{M} + \sqrt{N})^2$ for a nonvanishing rate.
\end{myremark}

\begin{myremark}
The ergodic rate of the conventional data link in (\ref{R1upper3}) increases slightly with the power of the backscattered signal because the BTx also unintentionally serves as a relay in the network.
\end{myremark}

\begin{myremark} \label{remark3}
As shown in (\ref{R2upper}), the ergodic rate of the backscatter link increases with the number of receive antennas and decreases with the transmission period. In particular, $\bar{R}_2$ scales like $\frac{1}{K} \log_2KN$. It implies that more receive antennas can effectively compensate for a larger symbol period to achieve certain transmission requirement.
\end{myremark}

\begin{myremark}
It can also be seen directly from (\ref{R2upper}) that the ergodic rate of the backscatter link depends logarithmically on $\alpha^2$. The BTx with greater backscattered power has larger transmission rate.
\end{myremark}

\section{Simulation Results}
In this section, we provide numerical results to validate the theoretical derivations. We set $K = 15$ and the reflection coefficient of BTx is $\alpha = 0.5$. All simulation results are averaged over 1000 channel realizations.

\begin{figure*}
	\begin{minipage}[t]{0.5\linewidth}
		\includegraphics[width=3.2in,trim=25 5 35 20,clip]{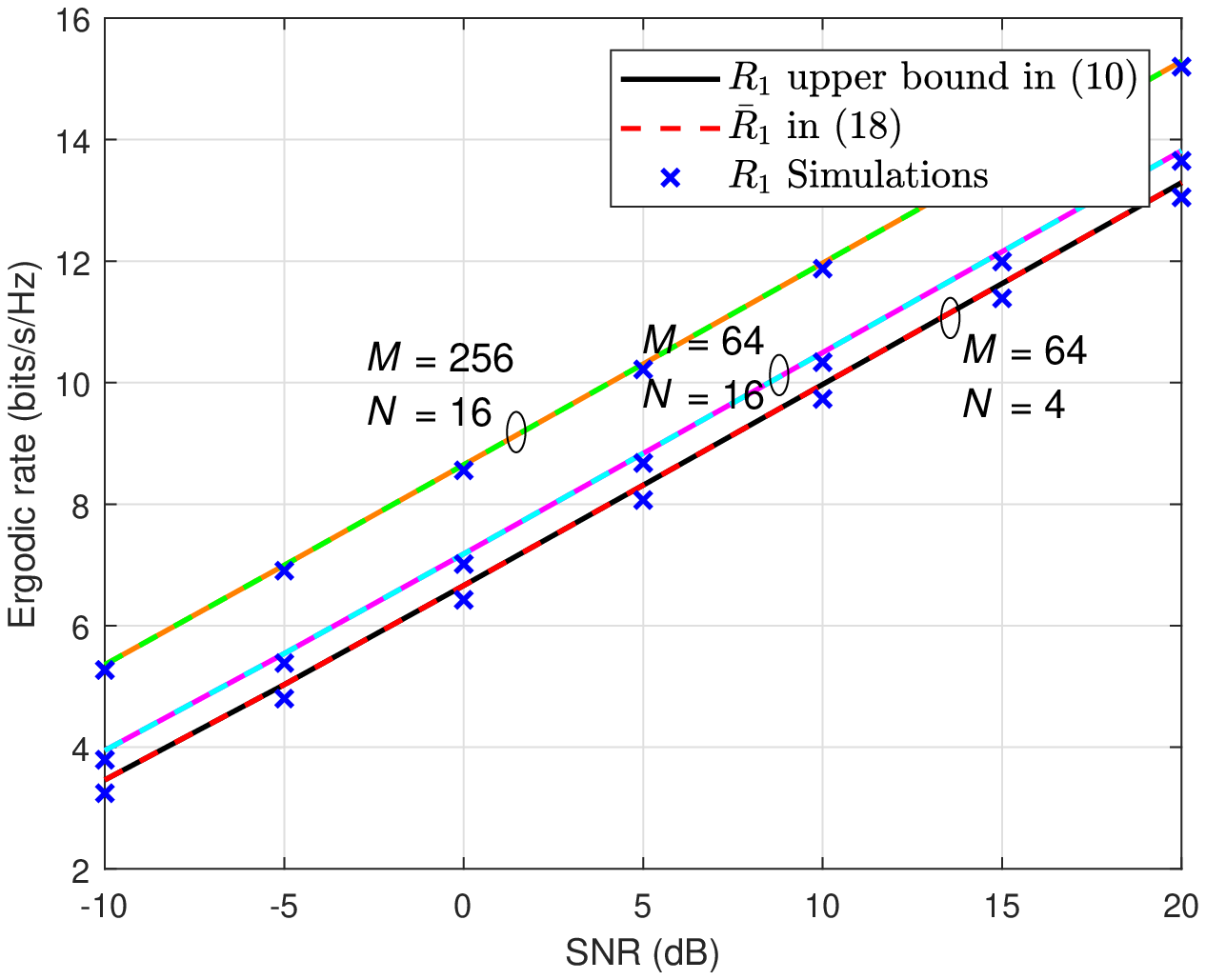}
		\caption{Ergodic rate $R_1$ versus SNR.}
		\label{Fig:R1comp}
	\end{minipage}
	\begin{minipage}[t]{0.5\linewidth}
		\includegraphics[width=3.2in,trim=25 5 35 20,clip]{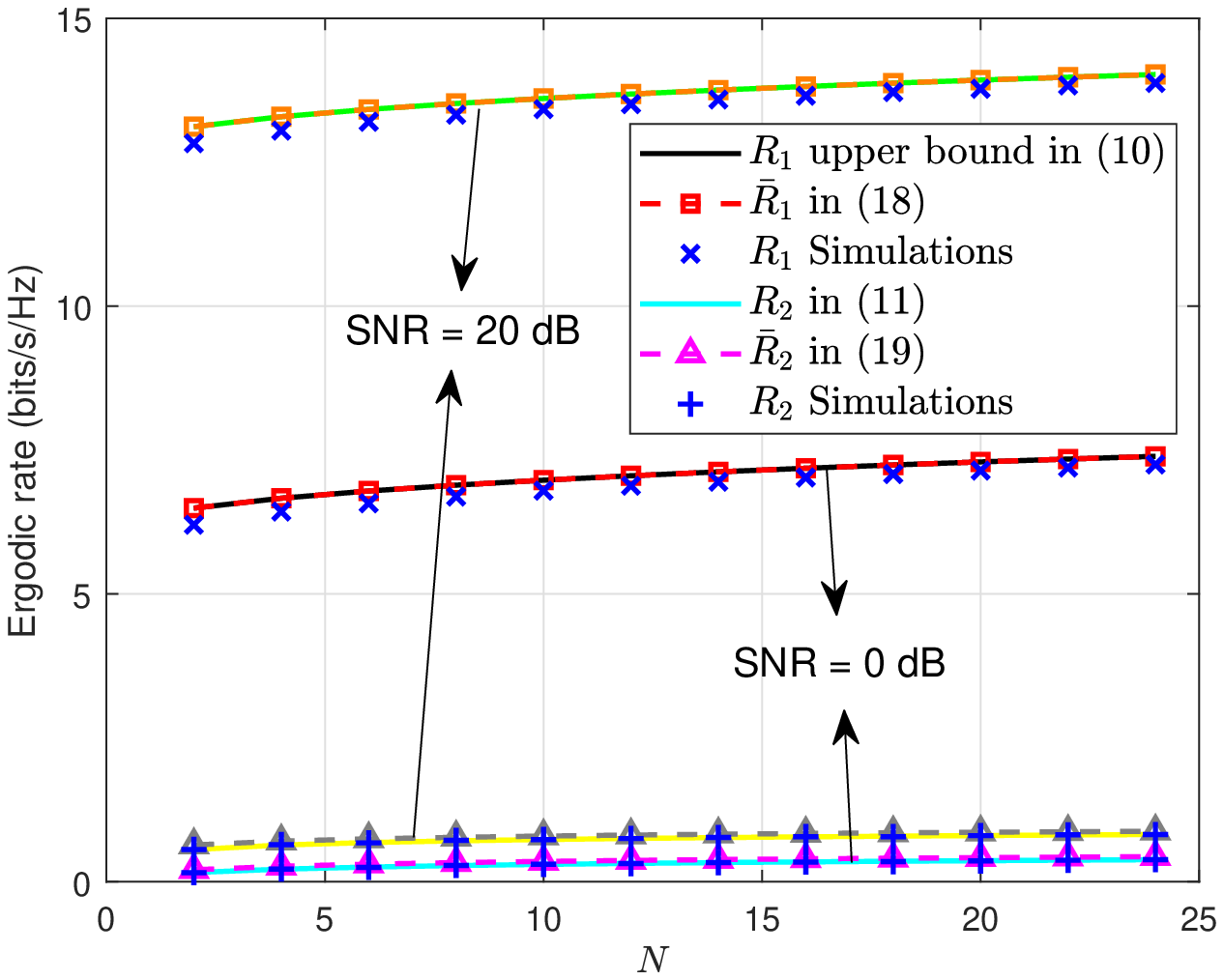}
		\caption{Ergodic rates $R_1$ and $R_2$ versus $N$.}
		\label{Fig:R2comp}
	\end{minipage}
\end{figure*}

Fig. \ref{Fig:R1comp} shows the ergodic rate of the primary link. Both bounds in (\ref{R1upper1}) and (\ref{R1upper3}) are presented for comparison. It shows that though large antenna number is assumed, the derived bound appears to be fairly tight even with small number of antennas, e.g., $N = 4$. The gap diminishes as the numbers of antennas increase. In Fig. \ref{Fig:R1comp}, when the number of RF source antennas and that of receive antennas increase four times simultaneously, the ergodic rate increases by 2 bps/Hz, which is consistent with \emph{Remark \ref{remark1}}.

\begin{figure}
  \setlength{\abovecaptionskip}{0.cm}
  \setlength{\belowcaptionskip}{-0.cm}
  \centering
  \includegraphics[width=3.2in,trim=25 5 35 20,clip]{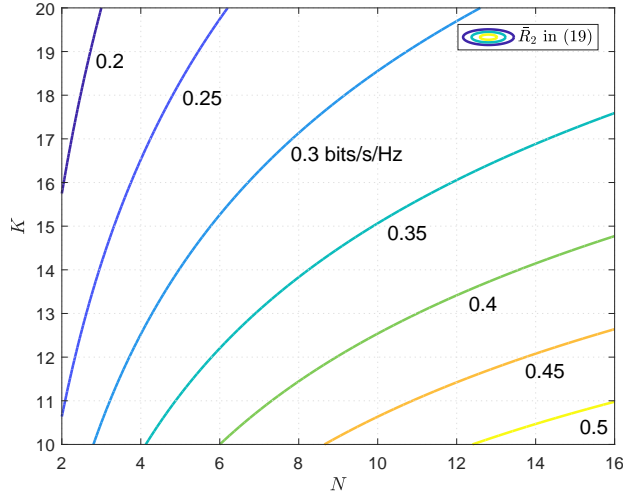}
  \caption{Ergodic rate $\bar{R}_2$ as a function of $N$ and $K$.}
  \label{Fig:Response1}
\end{figure}

We further assume that $M = 64$. In Fig. \ref{Fig:R2comp}, we plot the ergodic rates of both links obtained in \emph{Lemma 1}, \emph{Theorem 1} and by simulations. For $R_2$, the derived bound matches the simulation results well even with a small value of $N = 2$. As the number of receive antennas gets larger, the two ergodic rates increase with the respective orders as predicted in \emph{Remark \ref{remark1}} and \emph{Remark \ref{remark3}}. In Fig. \ref{Fig:Response1}, we plot the ergodic rate of the backscatter link as a function of $N$ and $K$. For each contour of $\bar{R}_2$, when the number of receive antennas increases, the ratio of transmission periods grows too, which corresponds to \emph{Remark \ref{remark3}}.

\section{Conclusion}
In this paper, we demonstrate the rate bounds of a cooperative ambient backscatter communication system. The transmit power of the RF source can be approximately reduced by the proportion of $1/(\sqrt{M} + \sqrt{N})^2$ with increasing numbers of antennas. The ergodic rate of the backscatter link asymptotically behaves like $\frac{1}{K} \log_2KN$. The two parameters, i.e., the transmission period and the number of receive antennas, can be cooperatively adjusted to achieve the desired performance. For our future work, it is interesting to obtain a tractable lower bound of the system rate for analysis.

\small


\begin{thebibliography}{1}

\bibitem{Boyer2014Backscatter}
C. Boyer and S. Roy, ``Backscatter communication and RFID: Coding, energy, and MIMO analysis,'' \emph{IEEE Trans. Wireless Commun.}, vol. 62, no. 3, pp. 770--785, Mar. 2014.

\bibitem{Liu2013Ambient}
V. Liu, A. N. Parks, V. Talla, S. Gollakota, D. Wetherall, and J. R. Smith, ``Ambient backscatter: Wireless communication out of thin air,'' in \emph{Proc. ACM SIGCOMM}, Hong Kong, China, Aug. 2013, pp. 39--50.

\bibitem{Zhang2016Enabling}
P. Zhang, M. Rostami, P. Hu, and D. Ganesan, ``Enabling practical backscatter communication for on-body sensors,'' in \emph{Proc. ACM SIGCOMM}, Florianopolis, Brazil, Aug. 2016, pp. 370--383.

\bibitem{Yang2018Cooperative}
G. Yang, Q. Zhang, and Y. Liang, ``Cooperative ambient backscatter communications for green Internet-of-things,'' \emph{IEEE Internet of Things J.}, vol. 5, no. 2, pp. 1116--1130, Apr. 2018.

\bibitem{Kang2018Riding}
X. Kang, Y. Liang, and J. Yang, ``Riding on the primary: A new spectrum sharing paradigm for wireless-powered IoT devices,'' \emph{IEEE Trans. Wireless Commun.}, vol. 17, no. 9, pp. 6335--6347, Sep. 2018.

\bibitem{Long2018Transmit}
R. Long, G. Yang, Y. Pei, and R. Zhang, ``Transmit beamforming for cooperative ambient backscatter communication systems,'' in \emph{Proc. IEEE GLOBECOM}, Singapore, Dec. 2017, pp. 1--6.

\bibitem{Yingbin2004Capacity}
Y. Liang and V. V. Veeravalli, ``Capacity of noncoherent time-selective Rayleigh-fading channels,'' \emph{IEEE Trans. Inf. Theory}, vol. 50, no. 12, pp. 3095--3110, Dec. 2004.

\bibitem{Adamchik1990The}
V. S. Adamchik and O. I. Marichev, ``The algorithm for calculating integrals of hypergeometric type functions and its realization in REDUCE system,'' in \emph{Proc. ISSAC}, Tokyo, Japan, 1990, pp. 212--224.

\bibitem{Chun2007On}
C. K. Au-Yeung and D. J. Love, ``On the performance of random vector quantization limited feedback beamforming in a MISO system,'' \emph{IEEE Trans. Wireless Commun.}, vol. 6, no. 2, pp. 458--462, Feb. 2007.

\bibitem{Xu2010MIMO}
W. Xu, X. Dong, and W. Lu, ``MIMO relaying broadcast channels with linear precoding and quantized channel state information feedback,'' \emph{IEEE Trans. Signal Process.}, vol. 58, no. 10, pp. 5233--5245, Oct. 2010.

\bibitem{Gradshteyn2007Table}
I. S. Gradshteyn and I. M. Ryzhik, \emph{Table of Integrals, Series, and Products}, 7th Ed. San Diego, CA: Academic Press, 2007.

\bibitem{Alan1989Eigenvalues}
A. Edelman, ``Eigenvalues and condition numbers of random matrices,'' Ph.D. dissertation, MIT, Cambridge, MA, 1989.
\end{thebibliography}
\end{document}